\newcommand{\pdiff}{\ensuremath{\mathcal{E}}}
\renewcommand{\vec}[1]{\mathbf{#1}}
\newcommand{\al}{\alpha}
\newcommand{\tp}{\ensuremath{^\top}}
\begin{document}

\title{A revised model of fluid transport optimization\\ in \emph{Physarum polycephalum}}
\author{Vincenzo Bonifaci\thanks{Istituto di Analisi dei Sistemi ed Informatica, Consiglio Nazionale delle Ricerche, Rome, Italy. Email: \url{vincenzo.bonifaci@iasi.cnr.it}}}
\date{}

\maketitle

\begin{abstract}
Optimization of fluid transport in the slime mold \emph{Physarum} \emph{polycephalum} has been the subject of several modeling efforts in recent literature. Existing models assume that the tube adaptation mechanism in \emph{P. polycephalum}'s tubular network is controlled by the sheer amount of fluid flow through the tubes. We put forward the hypothesis that the controlling variable may instead be the flow's pressure gradient along the tube. We carry out the stability analysis of such a revised mathematical model for a parallel-edge network, proving that the revised model supports the global flow-optimizing behavior of the slime mold for a substantially wider class of response functions compared to previous models. 
Simulations also suggest that the same conclusion may be valid for arbitrary network topologies. 
\end{abstract}

\section{Introduction}
\emph{Physarum polycephalum} is an amoeboid slime mold that exhibits remarkable information processing capabilities. In controlled experiments, the slime mold's abilities have been leveraged to determine the shortest path between two locations in a network  \citep{Nakagaki:2000,Tero:2006} and, more generally, to adaptively form efficient transport networks \citep{Tero:2010}. The question remains, however, of detailing and analyzing the underlying mechanisms and goals of such an optimization process, which have been only partially explored \citep{Tero:2007,Miyaji:2007,Tero:2010,Ito:2011,Bonifaci:2012:b,Ma:2013,Bonifaci:2013}.  

\emph{P.~polycephalum}, like other Myxomycetes, has a somewhat complicated life cycle consisting of several stages \citep{Stephenson:2000}. During its mature plasmodium stage it forms a single, giant multinucleate acellular structure. This acellular structure takes the form of a tubular vein network, through which protoplasm is periodically transported, driven by the gradient of hydrostatic pressure. 

In a remarkable experiment by \citet*{Nakagaki:2000}, the plasmodium has been placed on a preexisting artificial network structure and two food sources (oat flakes) have been laid over two nodes $s_0$, $s_1$ of the network. \emph{P.~polycephalum} reacted by adapting its shape dynamically, by controlling the width of the tubular structures forming its veins, based on feedback from the protoplasmic flow. Gradually, several branches of the vein network collapsed, eventually leaving the mass of the slime mold only along the shortest path between the two food sources $s_0$ and $s_1$, thus exhibiting evidence of fluid transport optimization. 

\citet*{Tero:2007} were the first to propose a mathematical model for the transport optimization dynamics of \emph{P.~polycephalum}. For a critical value of the model parameters, the model's dynamics indeed provably converge to the shortest path between the two food-source terminals of the underlying network. Such a convergence to the shortest path has been analytically proven by \citet*{Tero:2007} for a ring-shaped network, by \citet*{Miyaji:2007} for a Wheatstone bridge-shaped network, and finally by \citet*{Bonifaci:2012:b} for arbitrary complex networks. 
However, the global convergence of the dynamics to the shortest path fails for other values of the model's power-law exponent, called $\mu$ by \citet{Tero:2007}, as well as for nonzero values of the parameter $\alpha$ controlling saturation of the tubular dynamics. In this article we propose a revised model that, while enlarging the class of admissible response functions, has the property of converging to the shortest path in the network from any initial condition, independently of the specific details of the response dynamics or of the values of its parameters. We prove this analytically for a parallel-edge network, and test our claim by simulation on more complex topologies. 

Fluid transport in \emph{P.~polycephalum} is based on the positive feedback mechanism between the width of the veins and the fluid flow. On a longer time scale, it is driven by network peristalsis, which reverses the flow velocities periodically \citep{Alim:2013,Baumgarten:2013}. It has been found that the time between reversals is much larger than the time required for the veins to adapt their widths under a steady flow. In this work we focus on the vein adaptation dynamics, without attempting to model the peristaltic oscillations. The reader is referred to \citet{Kobayashi:2006} for one possible model of the rhythmic protoplasmic movement.  
Other details on the physical underpinnings of Physarum's tubular dynamics are discussed by \citet{Tero:2005}. Finally, the slime mold network dynamics have also been considered from a combinatorial optimization perspective in the discrete algorithms literature \citep{Becchetti:2013,Straszak:2016:a}.

\subsection{Outline of the paper} 
We recall the Tero-Kobayashi-Nakagaki network adaptation model in Section \ref{sec:tkn}, and present a revised model in Section \ref{sec:revised-model}. A stability analysis of the new model's equilibria is carried out in Section \ref{sec:parallel} for networks consisting of parallel edges; it is proved that the unique stable fixed point corresponds to the shortest path in the graph, independently of the power-law exponent and of the saturation parameter, or, indeed, of the particular shape of the tubular response functions. Moreover, any nontrivial trajectory approaches such a fixed point. 
In Section \ref{sec:simul} we report on simulations for more complex network topologies; these suggest that the globally optimal behavior of the dynamics may hold in arbitrary topologies, again independently of the particular shape of the tubular response functions. We close by summarizing and discussing our findings in Section \ref{sec:conclusion}. 

\section{Mathematical model}
\label{sec:model}

\subsection{The Tero-Kobayashi-Nakagaki model}
\label{sec:tkn}
Let $G$ be an undirected multigraph with node set $N$, edge set $E$, edge lengths $\vec l \in \Real^E_{>0}$ and two distinguished nodes $s_0, s_1 \in N$. The graph models \emph{P. polycephalum}'s vein network; the edges represent the tubular channels, and the nodes represent junctions between the tubes. The two distinguished nodes $s_0$, $s_1$ are two junctions corresponding to the location of the food sources. 

As food is absorbed by the organism, and protoplasmic flow is distributed through the network, the widths of the tubular channels adapt to the flow. In our discussion, $\vec x \in \Real^E_{>0}$ will be a state vector representing the fourth powers of the radii of the tubular channels of the slime mold. For an edge $e \in E$, the value $x_e$ is called the \emph{capacity} of $e$. 
The transport optimization process described in the introduction has previously been modeled \citep{Tero:2007} as a system of coupled, nonlinear ordinary differential equations, 
\begin{equation}
\label{eq:adap} 
\dot{x}_e = f \left(\abs{q_e} \right) - x_e
\qquad \text{ for all } e \in E. 
\end{equation}

Equation \eqref{eq:adap} is called the \emph{adaptation equation}. 
The function $f$ models the response dynamics of the tubular channels to the fluid flow $q_e$ along the edge. Proposed forms include $f(y)=y^\mu$ and $f(y)=(1+\al) y^\mu/(1+\al y^\mu)$, where $\mu, \al >0$ are parameters of the model.  
The dynamic vector $\vec q \in \Real^E$, called the \emph{(fluid) flow}, is determined at any time by the capacity and length of the edges, by solving a network Poisson equation, as follows. Without loss of generality, assume that $N=\{1,2,\ldots,n\}$, $E = \{ 1, 2,\ldots, m \}$ and assume an arbitrary orientation of the edges. Let $\vec B= (B_{ve})_{v \in N, e \in E}$ be the incidence matrix of $G$ under this orientation, that is,
$$B_{ve} \defas \begin{cases} 
+1 & \text{ if } v \text{ is the tail of } e \\
-1 & \text{ if } v \text{ is the head of } e \\
0 & \text{ otherwise}. 
\end{cases}
$$
Then $\vec q$ is defined as the unit-value flow from $s_0$ to $s_1$ of minimum energy, that is, as the unique optimal solution to the following continuous quadratic optimization problem:  
\begin{align}
\label{eq:thomson}
\min\ & \vec q\tp \vec R \vec q \\
\notag \text{s.t. } & \vec B \vec q = \vec b.  
\end{align}
Here, $\vec R \in \Real^{E \times E}$ is the diagonal matrix with value $r_{e} \defas l_{e}/x_{e}$ for the $e$-th element of the main diagonal, and $\vec b \in \Real^N$ is the vector defined by
$$
b_v \defas \begin{cases} 
0 & \text{ if } v \notin \{s_0,s_1\}, \\ 
+1 & \text{ if } v=s_0, \\ 
-1 & \text{ if } v=s_1. 
\end{cases}
$$
It is well-known that a vector $\vec q$ is optimal for system \eqref{eq:thomson} if and only if it satisfies Kirchhoff's circuit laws \citep[Chapter IX]{Bollobas:1998}. In particular, \emph{Kirchhoff's current law} is expressed by the constraint $\vec B \vec q = \vec b$, which, in words, requires that the flow has zero divergence everywhere except at nodes $s_0$ and $s_1$. 
\emph{Kirchhoff's voltage law} is implicit in the optimality condition for \eqref{eq:thomson}, which implies that there exist values $p_1,\ldots,p_n \in \Real$ (the node \emph{potentials}) satisfying the hydrodynamic analogue of \emph{Ohm's law} \citep[Section II.1]{Bollobas:1998}: 
\begin{equation} 
\label{eq:ohm}
q_e = (p_u - p_v)/r_e, \qquad \text{ whenever edge } e \text{ is oriented from } u \text{ to } v.
\end{equation}
The sum of potential differences along any cycle of the network is thus zero.

Node $s_0$ is the \emph{source} of the flow, node $s_1$ the \emph{sink}. It is very important to remark that while the flow has been somewhat arbitrarily directed from $s_0$ to $s_1$, the opposite choice yields exactly the same dynamics, because of the absolute value in \eqref{eq:adap}; the only effect would be to replace $\vec q$ with $-\vec q$. In other words, a flow reversal has no effect on the veins' dynamics. 

The value $r_e$ is called the \emph{resistance} of edge $e$, while $b_v$ is the \emph{divergence} of the flow $\vec q$ at $v$. The constant $b_{s_0}$ (here, $b_{s_0}=1$) is the flow's \emph{value}, that is, the divergence of the flow at the source. 

The quantity $\pdiff \defas \vec q\tp \vec R \vec q$ is the (instantaneous) \emph{energy} of the flow $\vec q$. By the \emph{conservation of energy} principle, the energy of the flow equals the difference between the source and sink potentials, times the value of the flow \citep[Corollary IX.4]{Bollobas:1998}: 
\begin{equation}
\label{eq:conservation}
\pdiff = (p_{s_0}-p_{s_1}) b_{s_0} = p_{s_0}-p_{s_1}. 
\end{equation}  

An alternative way to express the fluid flow vector arises from the Laplacian operator of the graph \citep{Strang:1988,Biggs:1997}. Let $\vec C \defas \vec R^{-1}$. The \emph{Laplacian} of $G$ is the symmetric and positive semidefinite matrix $\vec L \defas \vec B \vec C \vec B\tp$. If we represent the potential vector by $\vec p \in \Real^N$, Ohm's law \eqref{eq:ohm} can be written in matrix form as 
\begin{equation}
\label{eq:ohm-vec}
\vec q = \vec C \vec B\tp \vec p.
\end{equation}
Multiplying both sides by $\vec B$ yields the \emph{discrete Poisson equation} $\vec L \vec p = \vec b$, with solution $\vec p = \vec L^+ \vec b$, where $\vec L^+$ is the Moore-Penrose pseudoinverse\footnote{The \emph{Moore-Penrose pseudoinverse} of a matrix $\vec A$ is the unique matrix $\vec A^+$ such that $\vec A \vec A^+ \vec A= \vec A$, $\vec A^+ \vec A \vec A^+=\vec A^+$, and $\vec A \vec A^+$ and $\vec A^+\vec A$ are both Hermitian.} of $\vec L$.  Substituting in \eqref{eq:ohm-vec}, we get
\begin{equation}
\label{eq:flow-formula}
\vec q = \vec C \vec B\tp \vec L^+ \vec b.
\end{equation}

The fluid flow has been observed to be laminar \citep{Kamiya:1950}. In this case, \emph{Poiseuille's law} expresses the relation between the flow rate, the tube radius, and the pressure gradient: for an edge $e=(u,v)$,  
\begin{equation}
\label{eq:poiseuille}
\abs{q_e} = \frac{\pi R_e^4}{8 \eta} \frac{\abs{p_u-p_v}}{l_e} = \frac{\pi x_e}{8 \eta} \frac{\abs{p_u-p_v}}{l_e},  
\end{equation}
where $R_e\defas x_e^{1/4}$ is the radius of the tube, and $\eta$ is the viscosity constant. 

\subsection{The revised model}
\label{sec:revised-model}
The revised model that we propose resembles closely that discussed in the previous section, following in particular Equations \eqref{eq:thomson}--\eqref{eq:poiseuille}. However, since in this work we conjecture that pressure gradients, rather than sheer flow amounts, control the response of the tubular channels, we take the controlling variables to be the ratios $\abs{p_u-p_v}/l_e$ (instead of the edge flows $\abs{q_e}$). After appropriate normalization, Poiseuille's law implies that the pressure gradients are equivalent to the ratios $\abs{q_e}/{x_e}$. Therefore, we replace Equation \eqref{eq:adap} by 
\begin{equation}
\label{eq:adap2} 
\dot{x}_e = x_e \left( f_e \left(\frac{\abs{q_e}}{x_e} \right) - 1 \right)
\qquad \text{ for all } e \in E. 
\end{equation}
Each edge $e \in E$ has its own dimensionless response function $f_e : \Real_{\ge 0} \to \Real_{\ge 0}$. These response functions are assumed to satisfy the following condition. 
\begin{definition}
\label{def:srf}
A function $f : \Real_{\ge 0} \to \Real_{\ge 0}$ is a \emph{standard response function} if: 
\begin{enumerate}
\item $f(1)=1$; 
\item $f$ is strictly increasing on $\Real_{> 0}$; 
\item $f$ is differentiable on $\Real_{> 0}$. 
\end{enumerate}
\end{definition}
 
Our main result, the stability analysis in Section \ref{sec:parallel}, will not require any other property from the $f_e$, apart from being standard response functions. However, when contrasting our findings with those of \citet{Tero:2007} (Section \ref{sec:comparison}) and in the simulations (Section \ref{sec:simul}), we consider for concreteness the same types of response functions that have been considered in earlier literature: 
\begin{enumerate}
\item[(Type I)] {Nonsaturating response}:  
$f(y) = y^\mu$, for some $\mu>0$; 
\item[(Type II)] {Saturating response}: 
$f(y) = (1+\al) y^\mu / (1+\al y^\mu)$, for some $\mu,\alpha>0$. 
\end{enumerate}
While Type I functions have a simpler structure, Type II functions have the additional property of saturating as $y \to \infty$, implying a finite maximum radius for the tubes, and may therefore be considered more realistic. 
Note in any case that, by using 
\begin{equation}
\label{eq:response}
f(y) = \frac{(1+\al) y^\mu}{1+\al y^\mu}
\end{equation}
with $\mu>0$, $\alpha \ge 0$, one can capture both Type I ($\al=0$) and Type II ($\al>0$) responses. 

We observe that, when using response functions of the form \eqref{eq:response} with $\mu=1$ and $\alpha=0$, the dynamics \eqref{eq:adap2} are identical with \eqref{eq:adap}. We will show, however, that in general they have a qualitatively different behavior (Section \ref{sec:comparison}).  

For the purpose of analysis, we finally assume that the edge length vector $\vec l$ is such that each $s_0$-$s_1$ path in $G$ has a distinct overall length. That is, the configuration of lengths is nondegenerate, which is a physically realistic assumption. While this assumption is not crucial to our main results, it simplifies their statement: for example, without this assumption, the fixed points of the system may not be isolated and the shortest path in the network may not be unique. 

\subsection{Basic properties of the revised model}
We end this section with a couple of simple but useful properties of the revised model. 
\begin{proposition}
\label{prop:pLp}
$\pdiff = \vec p\tp \vec L \vec p$. 
\end{proposition}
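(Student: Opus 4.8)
The plan is to derive the identity by direct substitution, starting from the definition of the flow energy and expressing the flow in terms of the potentials. Recall that $\pdiff = \vec q\tp \vec R \vec q$, and that Ohm's law in matrix form, Equation \eqref{eq:ohm-vec}, gives $\vec q = \vec C \vec B\tp \vec p$. First I would substitute this expression for $\vec q$ into the energy to obtain $\pdiff = (\vec C \vec B\tp \vec p)\tp \vec R (\vec C \vec B\tp \vec p) = \vec p\tp \vec B \vec C\tp \vec R \vec C \vec B\tp \vec p$, using the rule $(\vec A \vec B)\tp = \vec B\tp \vec A\tp$ to transpose the left factor.

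The key step is the cancellation of the middle factors. Since $\vec R$ and $\vec C = \vec R^{-1}$ are both diagonal, they are symmetric, so $\vec C\tp = \vec C$; and $\vec C \vec R = \vec R^{-1} \vec R = \vec I$. Hence $\vec C\tp \vec R \vec C = \vec C \vec R \vec C = \vec C$, which collapses the expression to $\pdiff = \vec p\tp \vec B \vec C \vec B\tp \vec p$. Recognizing the definition $\vec L = \vec B \vec C \vec B\tp$ of the Laplacian then yields $\pdiff = \vec p\tp \vec L \vec p$, as required.

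I do not anticipate any real obstacle: the statement is an algebraic identity that follows from the definitions of $\pdiff$ and $\vec L$ together with Ohm's law, and the only point needing a moment's care is keeping track of the transposes and of the symmetry of the diagonal matrices $\vec R$ and $\vec C$. As a consistency check, the same identity can be reached by a second route, since the discrete Poisson equation gives $\vec L \vec p = \vec b$, so that $\vec p\tp \vec L \vec p = \vec p\tp \vec b = p_{s_0} - p_{s_1}$, which by the conservation of energy \eqref{eq:conservation} equals $\pdiff$. The agreement of the two derivations would confirm the result.
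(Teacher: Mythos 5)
Your proof is correct and follows essentially the same route as the paper's: both substitute the potential-based expression for the flow (your \eqref{eq:ohm-vec}, the paper's equivalent \eqref{eq:flow-formula} with $\vec p = \vec L^+ \vec b$) into $\pdiff = \vec q\tp \vec R \vec q$ and collapse $\vec C\tp \vec R \vec C = \vec C$ to recover $\vec p\tp \vec L \vec p$. Your secondary check via $\vec L \vec p = \vec b$ and conservation of energy is a valid bonus, but the core argument is the same as the paper's.
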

\begin{proof}
By \eqref{eq:flow-formula} and the definition of $\vec L$, 
$\pdiff = \vec q\tp \vec R \vec q = \vec b\tp \vec L^+ \vec B \vec C \vec B\tp \vec L^+ \vec b = \vec p\tp \vec L \vec p$. 
\end{proof}

\begin{proposition}
\label{prop:dissipative}
The set $[0,1]^E$ is an attracting set for the dynamics \eqref{eq:adap2}. 
\end{proposition}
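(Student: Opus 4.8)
The plan is to reduce everything to the behavior of the scalar quantity $M(t) \defas \max_{e \in E} x_e(t)$ and to show it cannot remain above $1$. Concretely, I would establish that once $M>1$ it is forced to strictly decrease, so that every trajectory satisfies $\limsup_{t} x_e(t) \le 1$ for all $e$, which is exactly the assertion that the closed box $[0,1]^E$ attracts all trajectories. I would work in the positive orthant $\Real_{>0}^E$, noting first that the multiplicative form of \eqref{eq:adap2} (its right-hand side vanishes on $\{x_e=0\}$) keeps every capacity strictly positive, so no trajectory escapes through the lower faces.

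The crucial input is the time-uniform bound $|q_e| \le 1$ for every edge $e$. I would derive this from the fact that the electrical flow $\vec q$ is acyclic: orienting each edge in the direction of decreasing potential via Ohm's law \eqref{eq:ohm} leaves no directed cycle carrying positive flow, since the potential would have to strictly decrease all the way around it. An acyclic flow with divergence $\vec b$ decomposes into simple $s_0$-$s_1$ paths whose flow values are nonnegative and sum to the flow value $b_{s_0}=1$; since each edge occurs at most once on a simple path, its flow is a partial sum of those values, whence $|q_e| \le 1$.

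Next I would examine the sign of $\dot x_e$ at an edge $e^\star$ realizing the maximum, $x_{e^\star}=M$, in the regime $M>1$. Recalling that $|q_{e^\star}|/x_{e^\star}$ is precisely the normalized pressure gradient that drives \eqref{eq:adap2}, the bound $|q_{e^\star}|\le 1 < x_{e^\star}$ yields $|q_{e^\star}|/x_{e^\star} \le 1/M < 1$. Since $f_{e^\star}$ is a standard response function (strictly increasing with $f_{e^\star}(1)=1$), this gives $f_{e^\star}(|q_{e^\star}|/x_{e^\star}) \le f_{e^\star}(1/M) < 1$, so $\dot x_{e^\star} = x_{e^\star}\bigl(f_{e^\star}(|q_{e^\star}|/x_{e^\star})-1\bigr) < 0$. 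Thus the largest capacity strictly decreases while it exceeds $1$; the same computation with $M \le 1$ gives $\dot x_{e^\star} \le 0$, showing that $(0,1]^E$ is forward invariant.

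Finally I would convert this pointwise sign information into genuine convergence, and this last step is the one requiring the most care, because $M(t)$ is only a maximum of finitely many differentiable functions and is therefore not differentiable. I would phrase it through the upper Dini derivative, using the standard identity $D^+M(t) = \max\{\dot x_e(t) : x_e(t)=M(t)\}$; by the previous step this is negative whenever $M>1$, so $M$ is nonincreasing and, while above $1$, decreases to some limit $L \ge 1$. To rule out $L>1$, I would observe that while $M(t)\ge L$ one has $1/M(t)\le 1/L$, so the decay rate is bounded away from zero: setting $\delta \defas \min_{e\in E}\bigl(1-f_e(1/L)\bigr) > 0$ (positive by finiteness of $E$ and monotonicity of each $f_e$), one gets $D^+M \le -L\delta < 0$ for all large $t$, contradicting $M\downarrow L$. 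Hence $L\le 1$, so $\limsup_t M(t)\le 1$ and every trajectory approaches $[0,1]^E$, which together with the forward invariance of $(0,1]^E$ gives the claim.
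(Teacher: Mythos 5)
Your proof is correct and rests on exactly the same mechanism as the paper's: the bound $|q_e| \le 1$ forces $|q_e|/x_e < 1$, and hence $\dot x_e < x_e\left(f_e(1)-1\right) = 0$, whenever $x_e > 1$. The paper's proof consists of just those two observations; your additional work---justifying $|q_e|\le 1$ via acyclicity of the electrical flow and path decomposition, and upgrading the pointwise sign condition to genuine attraction via the Dini-derivative argument that rules out a limit $L>1$ for the maximal coordinate---soundly fills in steps that the paper asserts without proof.
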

\begin{proof}
Observe that, for any edge $e\in E$, $0 \le \abs{q_e} \le 1$, since the flow on any edge cannot exceed the global flow value, which is $b_{s_0}=1$. Therefore, as long as $x_e > 1$, we have $\abs{q_e}/x_e < 1$ and $\dot x_e < x_e \cdot (f_e(1) - 1) = 0$. 
\end{proof}

\section{Stability analysis for parallel-edge networks}
\label{sec:parallel}
\subsection{Network structure}
To allow analytical tractability, in this section we limit our discussion to simple networks consisting of two nodes and a set of parallel edges between them (Figure \ref{fig:paredges}(a)); a special case is the ring-shaped network (Figure \ref{fig:paredges}(b)). 
While admittedly a simplification, parallel-edge and ring-shaped networks already exhibit a wide range of dynamical properties and have been the departure point of previous analyses; see, for example, the discussion by \citet{Tero:2007}. 

\begin{figure}
\centering
\subfigure[A parallel-edge network]
{\includegraphics[scale=0.5]{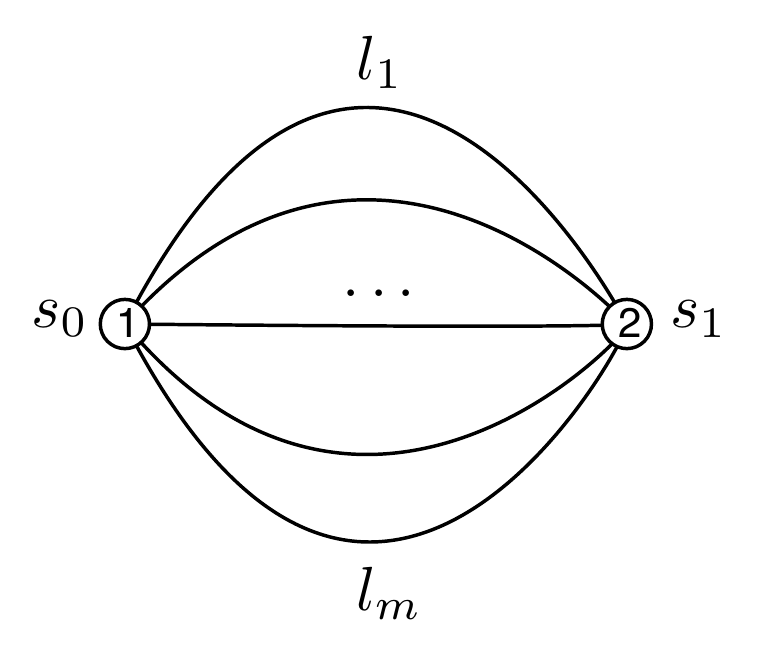}}
\subfigure[A ring-shaped network]
{\includegraphics[scale=0.5]{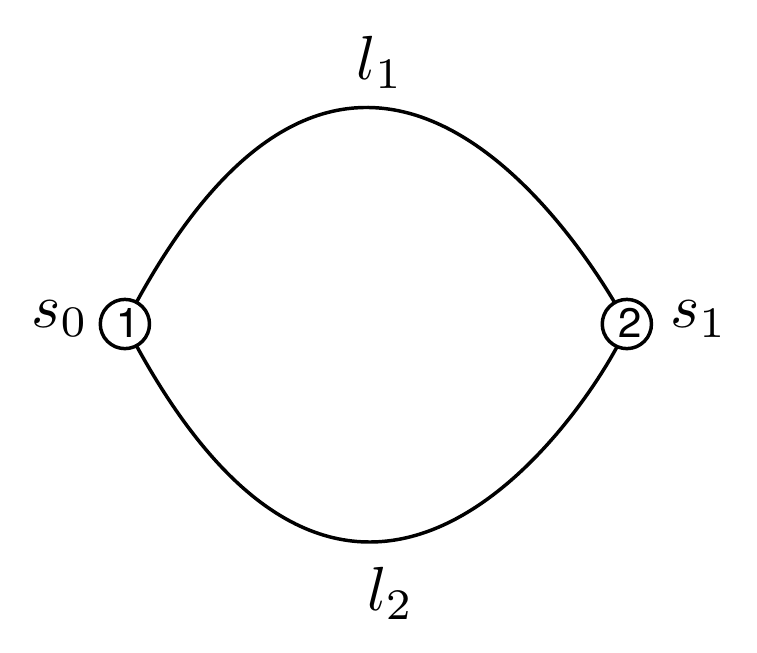}}
\caption{Examples of parallel-edge network topologies. Labels inside the nodes denote their identifier. Labels $s_0$ and $s_1$ denote the source and sink node, respectively. The label on each edge describes its length.}
\label{fig:paredges} 
\end{figure}

The incidence matrix $\vec B$ of a parallel-edge network has the simple structure
\begin{equation}
\label{eq:parlinks-structure}
\vec B = \left( \begin{array}{cccc}
+1 & +1 & \cdots & +1 \\
-1 & -1 & \cdots & -1 
\end{array} \right) .
\end{equation}
Using this structure, the instantaneous energy of the system is easily derived. 
\begin{proposition}
\label{prop:P}
In a parallel-edge network $G=(N,E)$,
\begin{equation}
\label{eq:pot}
\pdiff = (\trace \vec C)^{-1} = \left(\sum_{e\in E} x_e/l_e \right)^{-1}.  
\end{equation}
\end{proposition}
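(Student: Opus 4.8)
The plan is to compute the graph Laplacian explicitly from the two-row incidence matrix in \eqref{eq:parlinks-structure}, solve the resulting discrete Poisson equation, and then read off the energy. The whole argument reduces to a short, dimension-independent matrix computation.

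First I would form $\vec C \vec B\tp$. Since $\vec C$ is diagonal with entries $c_e = x_e/l_e$ and every column of $\vec B\tp$ equals $(+1,-1)\tp$, the product is the matrix whose $e$-th row is $(c_e, -c_e)$. Multiplying on the left by $\vec B$ then sums over the edge index and yields
$$
\vec L = \vec B \vec C \vec B\tp = \left(\sum_{e\in E} c_e\right)\begin{pmatrix} 1 & -1 \\ -1 & 1\end{pmatrix} = (\trace \vec C)\begin{pmatrix} 1 & -1 \\ -1 & 1\end{pmatrix},
$$
where I used that $\trace \vec C = \sum_{e\in E} c_e = \sum_{e\in E} x_e/l_e$ because $\vec C$ is diagonal. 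Next I would solve $\vec L \vec p = \vec b$ with $\vec b = (+1,-1)\tp$ (identifying $s_0,s_1$ with nodes $1,2$). This single scalar equation forces $p_1 - p_2 = (\trace \vec C)^{-1}$.

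Finally I would extract the energy. The cleanest route uses Proposition \ref{prop:pLp} together with $\vec L \vec p = \vec b$: since $\vec p = \vec L^+ \vec b$, we get
$$
\pdiff = \vec p\tp \vec L \vec p = \vec p\tp \vec b = p_1 - p_2 = (\trace \vec C)^{-1} = \left(\sum_{e\in E} x_e/l_e\right)^{-1}.
$$
Equivalently, one may simply invoke the conservation-of-energy identity \eqref{eq:conservation}, $\pdiff = p_{s_0}-p_{s_1}$, evaluated for the unit flow, which sidesteps Proposition \ref{prop:pLp} entirely.

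The only point requiring care — and the closest thing to an obstacle — is justifying that $\vec p = \vec L^+ \vec b$ actually satisfies $\vec L \vec p = \vec b$, i.e. that $\vec L \vec L^+ \vec b = \vec b$. This holds precisely because $\vec b = (+1,-1)\tp$ is orthogonal to the all-ones vector $\cvec$ spanning $\ker \vec L$, so $\vec b \in \operatorname{range}(\vec L)$; I would note this explicitly. Beyond that, the result is just the elementary fact that parallel conductances add, expressed through the Laplacian, and it clearly does not depend on the number of parallel edges.
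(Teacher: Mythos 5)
Your proof is correct and takes essentially the same approach as the paper: both compute the explicit $2\times 2$ Laplacian (with $\trace \vec C$ on the diagonal and $-\trace \vec C$ off it) and then extract the energy by combining Proposition \ref{prop:pLp} with the conservation identity \eqref{eq:conservation}. The only difference is in the closing algebra: the paper substitutes $p_1-p_2=\pdiff$ to obtain $\pdiff=\pdiff^2\,\trace\vec C$ and solves for $\pdiff$ (implicitly discarding the root $\pdiff=0$), whereas you first solve the Poisson equation $\vec L\vec p=\vec b$ to get $p_1-p_2=(\trace\vec C)^{-1}$, which neatly sidesteps that division, and your explicit check that $\vec b\perp\cvec$ guarantees $\vec b\in\operatorname{range}(\vec L)$ is a point the paper leaves tacit. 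One terminological slip: it is the rows of $\vec B\tp$ (equivalently, the columns of $\vec B$) that equal $(+1,-1)$, not the columns of $\vec B\tp$.
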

\begin{proof}
The matrix $\vec B$ given by \eqref{eq:parlinks-structure} yields the Laplacian
$$ \vec L= \vec B \vec C \vec B\tp = \left(\begin{array}{cc}
\trace \vec C & -\trace \vec C \\
-\trace \vec C & \trace \vec C. 
\end{array}
\right) 
$$
and combining \eqref{eq:conservation} and Proposition \ref{prop:pLp} yields
$$ \pdiff  = \vec p\tp \vec L \vec p = (p_1-p_2)^2 \trace \vec C = \pdiff^2 \trace \vec C. $$
Solving for $\pdiff$ yields the claim. 
\end{proof}

\subsection{Location of fixed points}
A \emph{fixed point} of \eqref{eq:adap2} is a vector $\vec x \in \Real^E_{\ge 0}$ such that $$x_e \cdot \left( f_e \left(\frac{\abs{q_e}}{x_e} \right) - 1 \right) = 0 \text{ for all } e \in E.$$  

\begin{lemma}
\label{lem:basis}
The fixed points of \eqref{eq:adap2} in a parallel-edge network $G=(N,E)$ are exactly the standard basis vectors of $\Real^E$. 
\end{lemma}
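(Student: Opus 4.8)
The plan is to analyze the fixed-point condition one edge at a time and then exploit the rigidity forced by the parallel structure together with the nondegeneracy of the lengths. Since each $f_e$ is a standard response function, it is strictly increasing with $f_e(1)=1$, so $f_e(y)=1$ holds if and only if $y=1$. Consequently the defining equation $x_e\bigl(f_e(\abs{q_e}/x_e)-1\bigr)=0$ splits, edge by edge, into the two alternatives $x_e=0$ or $\abs{q_e}=x_e$. The whole argument then reduces to showing that at most one edge can fall in the second alternative.

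Next I would rewrite the second alternative using Ohm's law and Proposition \ref{prop:P}. First observe that the all-zero vector cannot be a fixed point: with every $x_e=0$ the Laplacian vanishes and no unit $s_0$-$s_1$ flow exists, so we may assume $\vec x \neq \vec 0$ and hence $\trace \vec C > 0$. Proposition \ref{prop:P} then gives $p_1-p_2 = \pdiff = (\trace \vec C)^{-1} > 0$, and Ohm's law \eqref{eq:ohm} yields $q_e = (p_1-p_2)x_e/l_e \ge 0$, so that $\abs{q_e}=q_e$. For an active edge (one with $x_e>0$) the condition $\abs{q_e}=x_e$ therefore becomes $(p_1-p_2)/l_e=1$, i.e.\ $l_e = \pdiff$.

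The key step is now immediate: $\pdiff$ is a single scalar, while the nondegeneracy assumption makes all lengths $l_e$ distinct, so the equation $l_e=\pdiff$ can hold for at most one edge. Hence at most one edge is active, and since $\vec x \neq \vec 0$ exactly one edge $e^\ast$ is active. All of the unit flow then passes through $e^\ast$, giving $q_{e^\ast}=1$, and the active-edge condition $q_{e^\ast}=x_{e^\ast}$ forces $x_{e^\ast}=1$; thus $\vec x$ is the standard basis vector supported on $e^\ast$. For the converse one checks directly that each such vector is a fixed point: on $e^\ast$ the bracket is $f_{e^\ast}(1)-1=0$, and on every other edge the prefactor $x_e=0$ annihilates the term.

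I expect the only delicate points to be bookkeeping rather than conceptual: justifying that the zero vector lies outside the domain of the dynamics, and handling the formally undefined ratio $\abs{q_e}/x_e$ on inactive edges, where $q_e=0=x_e$ but the fixed-point equation is already multiplied through by $x_e$. The genuinely load-bearing observation --- that distinct lengths force a single active edge --- is short once the per-edge condition has been translated into $l_e=\pdiff$.
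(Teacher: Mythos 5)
Your proposal is correct and follows essentially the same route as the paper's proof: reduce the fixed-point condition to ``$x_e=0$ or $\abs{q_e}/x_e=1$'' via the standard-response property, translate the active-edge condition through Ohm's law into $\pdiff = l_e$, and invoke the distinct-lengths assumption to force a single active edge carrying the full unit flow. Your additional care in excluding the all-zero vector and in handling the ratio on inactive edges is a minor refinement of bookkeeping the paper leaves implicit, not a different argument.
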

\begin{proof}
Since the response function $f_e$ is assumed to be standard (Definition \ref{def:srf}), the unique solution to $f(y)=1$ is $y=1$. The fixed point condition for \eqref{eq:adap2} is thus equivalent to 
$$ (x_e = 0) \text{ or } (\abs{q_e}/x_e = 1) \qquad \text{ for all } e \in E. $$
In a parallel-edge network, each term $\abs{q_e}/x_e$  simplifies to $\pdiff/l_e$ by Ohm's law \eqref{eq:ohm} and by \eqref{eq:conservation}. Moreover, we assumed that no two source-sink paths have the same length. This implies that, in a fixed point, there cannot be two distinct $x_e$'s with $x_e \neq 0$; which in turn implies that in a fixed point, $x_i=q_i=1$ for exactly one $i \in E$ and $x_{e}=q_e=0$ for all $e \neq i$. Conversely, it is straightforward to verify that any standard basis vector $\vec \chi_i$ of $\Real^E$ is a fixed point of \eqref{eq:adap2}. 
\end{proof}

\subsection{Nature of fixed points}
Recall that our assumption on the response functions $f_e: \Real_{\ge 0} \to \Real_{\ge 0}$ is that they are increasing, differentiable, and such that $f_e(1)=1$. 
After substitution in \eqref{eq:adap2}, using $\abs{q_e}/x_e = \pdiff/l_e$,  the  adaptation equation can be equivalently written as
\begin{equation}
\label{eq:adap3}
\dot{x}_e = x_e \left( f_e \left( \frac{\pdiff}{l_e} \right) - 1 \right) \qquad \text{ for all } e \in E, 
\end{equation}
where, as observed in Proposition \ref{prop:P}, $\pdiff=(\sum_e x_e/l_e)^{-1}$. 

\begin{theorem}
\label{thm:stability}
System \eqref{eq:adap3} has exactly one stable fixed point, namely, the standard basis vector $\vec \chi_{i^*}$ corresponding to the edge $i^*$ of shortest length in the network. All other fixed points are unstable. 
\end{theorem}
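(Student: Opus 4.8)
The plan is to classify each fixed point $\vec\chi_i$ from Lemma \ref{lem:basis} by linearizing the right-hand side of \eqref{eq:adap3} at $\vec\chi_i$ and inspecting the spectrum of the Jacobian; the signs of its eigenvalues then decide stability versus instability, and matching those signs against the edge lengths will single out the shortest edge $i^*$. Since every fixed point is a standard basis vector, each one is isolated, so linearization is the natural tool.

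To build the Jacobian, write $g_e(\vec x) = x_e\,(f_e(\pdiff/l_e) - 1)$ with $\pdiff = (\sum_j x_j/l_j)^{-1}$. The one computation needed is $\partial \pdiff/\partial x_k = -\pdiff^2/l_k$, obtained by differentiating $\pdiff^{-1} = \sum_j x_j/l_j$. The product and chain rules then give
$$\frac{\partial g_e}{\partial x_k} = \delta_{ek}\left(f_e(\pdiff/l_e) - 1\right) - x_e\, f_e'(\pdiff/l_e)\,\frac{\pdiff^2}{l_e\, l_k}.$$
Now I evaluate at $\vec\chi_i$, where $x_i = 1$, $x_e = 0$ for $e \neq i$, and $\pdiff = l_i$. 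The crucial structural feature is that every component $g_e$ carries the factor $x_e$, so the second term survives only in row $e = i$, while in that same row the first term vanishes because $f_i(1) = 1$. Hence each row $e \neq i$ collapses to its diagonal entry $f_e(l_i/l_e) - 1$, and row $i$ equals $-f_i'(1)\, l_i/l_k$ across the columns $k$. After reordering so that index $i$ comes first, the Jacobian is upper triangular, and its eigenvalues are precisely the diagonal entries: $-f_i'(1)$ together with $f_e(l_i/l_e) - 1$ for each $e \neq i$.

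The sign analysis is then immediate from the standard-response-function hypotheses. Because each $f_e$ is increasing with $f_e(1) = 1$, the eigenvalue $f_e(l_i/l_e) - 1$ is negative exactly when $l_i < l_e$ and positive when $l_i > l_e$. Thus $\vec\chi_i$ can be stable only if $l_i < l_e$ for all $e \neq i$, i.e. only if $i = i^*$, the unique shortest edge by the nondegeneracy assumption on $\vec l$; and for any $i \neq i^*$ the eigenvalue belonging to the direction $e = i^*$ satisfies $f_{i^*}(l_i/l_{i^*}) - 1 > 0$, so $\vec\chi_i$ has an unstable direction. This establishes that all $\vec\chi_i$ with $i \neq i^*$ are unstable and that $\vec\chi_{i^*}$ is the only candidate for a stable fixed point.

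The remaining point, which I expect to be the main obstacle, is the eigenvalue $-f_{i^*}'(1)$: strict monotonicity and differentiability guarantee only $f_{i^*}'(1) \ge 0$, so this eigenvalue could vanish and leave the linearization inconclusive at $\vec\chi_{i^*}$. I would resolve this by exploiting that each coordinate hyperplane $\{x_e = 0\}$ is invariant, again because $\dot x_e$ has $x_e$ as a factor; in particular the positive $x_{i^*}$-axis is an invariant line through $\vec\chi_{i^*}$ carrying the eigendirection of $-f_{i^*}'(1)$. On that line $\pdiff/l_{i^*} = 1/x_{i^*}$, so $\dot x_{i^*} = x_{i^*}\,(f_{i^*}(1/x_{i^*}) - 1)$, whose sign is positive for $x_{i^*} < 1$ and negative for $x_{i^*} > 1$; hence $x_{i^*} = 1$ is attracting along this axis regardless of the value of $f_{i^*}'(1)$. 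Combining this one-dimensional nonlinear argument in the degenerate direction with the strictly negative eigenvalues in the transverse directions yields the (asymptotic) stability of $\vec\chi_{i^*}$ and completes the proof.
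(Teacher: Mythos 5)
Your linearization is exactly the paper's argument: the same partial derivative $\partial\pdiff/\partial x_k=-\pdiff^2/l_k$, the same Jacobian entries, the same observation that at $\vec\chi_i$ every row $e\neq i$ collapses to the single diagonal entry $f_e(l_i/l_e)-1$ while row $i$ contributes $-f_i'(1)\,l_i/l_k$, hence the same triangular matrix, eigenvalue list, and sign analysis singling out $i^*$. Where you genuinely depart from the paper is in the last paragraph, and this is a real improvement rather than a detour. The paper simply asserts that ``by our assumptions on the response functions, $-f_i'(1)<0$,'' but Definition \ref{def:srf} (strictly increasing, differentiable, $f(1)=1$) does not force $f'(1)>0$: for instance $f(y)=1+(y-1)^3$ is a standard response function with $f'(1)=0$, so the paper's proof, read literally against its own hypotheses, leaves the stability of $\vec\chi_{i^*}$ unresolved in this degenerate case (the instability claims are unaffected, since they rest on a strictly positive eigenvalue). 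Your patch is sound: the positive $x_{i^*}$-axis is invariant, it coincides locally with the eigendirection of the possibly-zero eigenvalue, the one-dimensional dynamics $\dot x_{i^*}=x_{i^*}\bigl(f_{i^*}(1/x_{i^*})-1\bigr)$ attract to $x_{i^*}=1$ by monotonicity alone, and the transverse spectrum is strictly negative, so the reduction principle of center-manifold theory yields asymptotic stability. The only caveat is that invoking that reduction principle requires somewhat more regularity (a $C^1$ vector field) than the bare differentiability in Definition \ref{def:srf}, so to be fully rigorous you should either note this hypothesis or replace the reduction step by a direct local Lyapunov argument; for the Type I and Type II responses actually considered in the paper one has $f'(1)>0$, and both your proof and the paper's reduce to the clean hyperbolic case there.
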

\begin{proof}
Let $\vec x \in \Real^E_{\ge 0}$ and $i, j \in E$ ($i\neq j$). 
Direct computation of the terms of the Jacobian matrix $\vec J(\vec x)$ of \eqref{eq:adap3} yields, using the substitution $y=\pdiff/l_i$,  
\begin{align*}
J_{ii}(\vec x) &= f_i \left( \frac{\pdiff}{l_i} \right) - 1 + x_i \,\frac{\partial f_i}{\partial x_i} \left(\frac{\pdiff}{l_i} \right)
\\ 
&= f_i \left( \frac{\pdiff}{l_i} \right) - 1 + x_i \, \frac{\partial y}{\partial x_i} \frac{\partial f_i}{\partial y} \left(\frac{\pdiff}{l_i}\right) \\
&= f_i \left( \frac{\pdiff}{l_i} \right) - 1 - x_i \, \frac{1}{l_i^2} \left(\sum_e \frac{x_e}{l_e}\right)^{-2} f'_i\left( \frac{\pdiff}{l_i} \right) \\
&= f_i \left( \frac{\pdiff}{l_i} \right) - 1 - x_i \, \frac{\pdiff^2}{l_i^2} f'_i\left( \frac{\pdiff}{l_i} \right), \\
J_{ij}(\vec x) &= x_i \, \frac{\partial f_i}{\partial x_j} \left(\frac{\pdiff}{l_i}\right) \\
&= x_i \, \frac{\partial y}{\partial x_j} \frac{\partial f_i}{\partial y} \left(\frac{\pdiff}{l_i}\right) \\
&= - x_i \, \frac{\pdiff^2}{l_i l_j} f'_i\left(\frac{\pdiff}{l_i}\right). 
\end{align*}
We evaluate the Jacobian at any standard basis vector $\vec \chi_i$ to obtain (for distinct $i,j,k \in E$)
\begin{align*}
J_{ii}(\vec\chi_i) &= -f'_i(1)  \\
J_{ij}(\vec\chi_i) &= -(l_i/l_j) f'_i(1) \\
J_{ji}(\vec\chi_i) &= 0 \\
J_{jj}(\vec\chi_i) &= f_j(l_i/l_j) - 1 \\
J_{jk}(\vec\chi_i) &= 0 \\
J_{kj}(\vec\chi_i) &= 0. 
\end{align*}
After rearranging the rows and columns of $\vec J(\vec\chi_i)$ so that the $i$th column and row are swapped with the first column and row, respectively, we obtain the following matrix $\vec J^{(i)}$: 
$$
\vec J^{(i)} = \left(\begin{array}{cccc}
-f'_i(1) & -(l_i/l_2) f'_i(1) & \ldots & -(l_i/l_m) f'_i(1) \\
0 & f_2(l_i/l_2)-1 & \ldots & 0 \\
0 & 0 & \ddots & 0 \\
0 & 0 & \ldots & f_m(l_i/l_m)-1 
\end{array}\right)
$$
where, for $j \notin \{1,i\}$, the $j$th element on the main diagonal is $f_j(l_i/l_j)-1$, and for $j=i$ it is $f_1(l_i/l_1)-1$. 
By construction, $\vec J^{(i)}$ and $\vec J(\vec\chi_i)$ have the same eigenvalues; the advantage of $\vec J^{(i)}$ is that it is upper triangular, and so its eigenvalues can be read off its main diagonal. These eigenvalues are 
$$ -f'_i(1), \, f_2(l_i/l_2)-1, \, \ldots, f_1(l_i/l_1)-1, \ldots, f_m(l_i/l_m)-1. $$
By our assumptions on the response functions, it holds that $-f'_i(1)<0$ and the sign of $f_j(l_i/l_j)-1$ is the same as the sign of $l_i-l_j$. We conclude that the eigenvalues associated to an equilibrium point $\vec\chi_i$ are all negative if and only if $l_i < l_j$ for all $j \neq i$. Otherwise, at least one eigenvalue is positive, and the fixed point is a source or a saddle. Consequently, there is exactly one stable fixed point of the dynamics \eqref{eq:adap3}, corresponding to the edge with shortest length; all other fixed points are unstable. We remark that this conclusion holds independently of the concrete form of the response functions, as long as they satisfy Definition \ref{def:srf}. 
\end{proof}

Note that Theorem \ref{thm:stability} does not rule out the existence of periodic orbits in phase space. To exclude this possibility, we show convergence to equilibrium for all trajectories. 

\begin{lemma}
\label{lem:converge}
Every trajectory of \eqref{eq:adap3} converges to an equilibrium as $t \to \infty$. 
\end{lemma}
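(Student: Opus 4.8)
The plan is to exploit the fact that, once \eqref{eq:adap3} is written as $\dot x_e = x_e\, g_e(\pdiff)$ with $g_e(\pdiff)\defas f_e(\pdiff/l_e)-1$, every coordinate is driven by the \emph{single} scalar $\pdiff=(\sum_e x_e/l_e)^{-1}$ (Proposition \ref{prop:P}). The decisive observation is that, because each $f_e$ is a standard response function, $g_e(\pdiff)$ is strictly increasing in $\pdiff$ and vanishes exactly at $\pdiff=l_e$; hence the sign of $g_e(\pdiff)$ equals the sign of $\pdiff-l_e$, \emph{independently of the shape of} $f_e$. In words, tube $e$ grows precisely when $\pdiff>l_e$ and shrinks precisely when $\pdiff<l_e$, and this threshold depends only on the edge length. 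This length-only threshold is exactly what lets me treat arbitrary heterogeneous response functions at once, and it singles out the shortest edge $i^*$ (with $l_{i^*}=\min_e l_e$) as the only tube that keeps growing over the widest range of $\pdiff$.

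First I would analyze the driving scalar. When $\pdiff<l_{i^*}$ every tube is below its threshold, so $g_e(\pdiff)<0$ for all $e$; since $\pdiff=(\sum_e x_e/l_e)^{-1}$, this makes $\dot\pdiff>0$, and in fact bounded away from $0$ on the attracting set $[0,1]^E$ (Proposition \ref{prop:dissipative}). Consequently each trajectory either enters the region $\{\pdiff\ge l_{i^*}\}$ in finite time or converges directly to $\vec\chi_{i^*}$, and this region is forward invariant: on its boundary $\pdiff=l_{i^*}$ one computes $\dot\pdiff\ge 0$, with equality only at $\vec\chi_{i^*}$, because $g_{i^*}(l_{i^*})=0$ while $g_e(l_{i^*})<0$ for every $e\neq i^*$. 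Inside this region $g_{i^*}(\pdiff)\ge 0$, so $x_{i^*}$ is nondecreasing; being bounded above by $1$, it converges to some limit $x_{i^*}^\infty$, which is positive whenever $x_{i^*}(0)>0$.

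To conclude I would invoke the $\omega$-limit set $\Omega$ of the (bounded) trajectory, which is nonempty, compact, and invariant. Since $x_{i^*}\to x_{i^*}^\infty>0$, the coordinate $x_{i^*}$ is constant on $\Omega$, so invariance forces $\dot x_{i^*}=0$ there, i.e.\ $g_{i^*}(\pdiff)=0$ and hence $\pdiff\equiv l_{i^*}$ on $\Omega$; then $\dot\pdiff\equiv 0$ on $\Omega$ together with $g_e(l_{i^*})<0$ for $e\neq i^*$ forces $x_e=0$ for all $e\neq i^*$, so $\Omega=\{\vec\chi_{i^*}\}$ and the trajectory converges to $\vec\chi_{i^*}$. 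Trajectories starting on a proper face $\{x_e=0 \text{ for } e\notin E'\}$ are handled by induction on $\abs{E'}$: such a face is invariant and carries exactly the parallel-edge dynamics on the sub-multigraph with edge set $E'$, so the trajectory converges to the standard basis vector of the shortest edge of $E'$, which is a fixed point of the full system (Lemma \ref{lem:basis}); the single-edge base case is immediate. The main obstacle is precisely establishing convergence of the driving scalar $\pdiff$, since the heterogeneity of the $f_e$ blocks the usual monotonicity-of-ratios arguments; the length-only threshold, the forward-invariant attracting region, and the $\omega$-limit/invariance argument are the devices designed to circumvent exactly this difficulty.
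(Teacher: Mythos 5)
Your argument is correct in substance, but it takes a genuinely different route from the paper's. The paper proves this lemma with a Lyapunov function: $V(\vec x) \defas \vec 1\tp \vec x + \ln \pdiff$, which is bounded below on the relevant region (via Proposition \ref{prop:dissipative}) and satisfies
$\dot V = -\sum_{e} x_e \left( \pdiff/l_e - 1\right)\left( f_e(\pdiff/l_e) - 1\right) \le 0$,
with equality exactly at the fixed points; convergence then follows by standard Lyapunov/LaSalle reasoning, the fixed points being isolated. Note that the sign of $\dot V$ rests on precisely the observation you place at the center of your proof --- that $f_e(\pdiff/l_e)-1$ has the same sign as $\pdiff - l_e$, a threshold depending only on the edge length --- so the two proofs hinge on the same key fact but package it differently. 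Your route replaces the (unmotivated, once found) Lyapunov function by direct phase-space analysis: the forward-invariant region $\{\pdiff \ge l_{i^*}\}$, monotone convergence of $x_{i^*}$, an $\omega$-limit/invariance argument, and induction on faces for boundary initial data. What your approach buys is a strictly stronger conclusion: it identifies the limit of every trajectory (in particular $\vec\chi_{i^*}$ whenever $x_{i^*}(0)>0$), thereby subsuming the paper's final theorem, which the paper must prove separately by a contradiction argument. What the paper's approach buys is brevity: a single computation handles all trajectories at once, with no region bookkeeping or case analysis.

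One claim of yours should be tightened: $\dot\pdiff$ is \emph{not} bounded away from zero on $\{\pdiff < l_{i^*}\} \cap [0,1]^E$; it tends to zero as the state approaches $\vec\chi_{i^*}$, and more generally as $\pdiff \to l_{i^*}$ from below. It is bounded away from zero only on subsets where $\pdiff \le l_{i^*}-\eps$ for fixed $\eps>0$. This does not break your dichotomy, but the second branch needs a different justification: if the trajectory never enters $\{\pdiff \ge l_{i^*}\}$, then $\pdiff$ is increasing and bounded, hence convergent, and its limit must equal $l_{i^*}$ (otherwise the trajectory stays where $\dot\pdiff \ge c > 0$, forcing $\pdiff \to \infty$); in that regime every $x_e$ is decreasing, so $\vec x$ converges, and the limit of a convergent trajectory of a continuous vector field is a fixed point, hence $\vec\chi_{i^*}$ by Lemma \ref{lem:basis} and the nondegeneracy of the lengths. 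With that repair, your proof is complete.
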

\begin{proof}
We claim that the function
$$ V(\vec x) \defas \vec 1\tp \vec x + \ln \pdiff = \sum_{e \in E} x_e - \ln \left(\sum_{e \in E} x_e/l_e \right) $$
is a Lyapunov function for \eqref{eq:adap3}. First observe that $V$ is bounded from below, as all $x_e \ge 0$ and $\pdiff$ is also bounded from below (by Proposition \ref{prop:dissipative}, we can assume that $x_e \le 1+\eps$ for each $e \in E$, after some finite time). 
Then we compute 
\begin{align*}
\dot{V}(\vec x) &= \sum_{e \in E} \frac{\partial V}{\partial x_e} \dot x_e \\
&= \sum_{e \in E} \left(1 - \frac{\pdiff}{l_e} \right) x_e \left( f_e \left(\frac{\pdiff}{l_e} \right) - 1 \right) \\
&= - \sum_{e \in E} x_e \left( \frac{\pdiff}{l_e} - 1 \right) \left( f_e \left(\frac{\pdiff}{l_e} \right) - 1 \right) \\
& \le 0, 
\end{align*}
where the last inequality follows from the facts that $x_e \ge 0$ for each $e \in E$, and that $\pdiff/l_e - 1$ has the same sign as $f_e(\pdiff/l_e) - 1$ (since $f_e$ is increasing and $f_e(1)=1$). 
Moreover, from the same derivation it follows that $\dot V(\vec x) = 0$ if and only if, for all $e \in E$, either $x_e = 0$ or $\pdiff = l_e$, that is, if and only if $\vec x$ is a fixed point of \eqref{eq:adap3}. 
\end{proof}

Given Theorem \ref{thm:stability} and Lemma \ref{lem:converge}, it is to be expected that all trajectories starting in the interior of the positive orthant converge to the stable fixed point. This is indeed formalized in our final result. 

\begin{theorem}
Let $\vec x(0)> \vec 0$. As $t \to \infty$, $\pdiff(t)$ converges to $l_{i^*}$, where $i^*$ is the shortest edge in $E$, and $\vec x(t)$ converges to $\vec \chi_{i^*}$. 
\end{theorem}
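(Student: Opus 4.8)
The plan is to combine the convergence-to-equilibrium guarantee of Lemma~\ref{lem:converge} with the fixed-point classification of Lemma~\ref{lem:basis}, and then to rule out convergence to any of the unstable equilibria by a direct argument on the shortest edge. The soft ingredients do most of the work; the only real content is showing that the limit cannot be any $\vec\chi_j$ with $j\neq i^*$.

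First I would record that the open positive orthant is forward invariant. Since the right-hand side of \eqref{eq:adap3} carries $x_e$ as a multiplicative factor, each coordinate hyperplane $\{x_e=0\}$ is invariant, so $\vec x(0)>\vec 0$ forces $\vec x(t)>\vec 0$ for all $t\ge 0$; in particular $x_{i^*}(t)>0$ throughout. By Lemma~\ref{lem:converge} the trajectory converges to some equilibrium, and by Lemma~\ref{lem:basis} that equilibrium is a standard basis vector $\vec\chi_j$. It therefore remains only to prove $j=i^*$, from which both stated limits follow (the claim $\pdiff(t)\to l_{i^*}$ comes from Proposition~\ref{prop:P} evaluated at $\vec\chi_{i^*}$).

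The heart of the argument is to exclude $j\neq i^*$. Suppose the trajectory converged to $\vec\chi_j$ with $l_j>l_{i^*}$. By Proposition~\ref{prop:P} and continuity of $\pdiff=(\sum_e x_e/l_e)^{-1}$ near $\vec\chi_j$ (where $\sum_e x_e/l_e\to 1/l_j>0$), we would have $\pdiff(t)\to l_j$, hence $\pdiff(t)/l_{i^*}\to l_j/l_{i^*}>1$. Since $f_{i^*}$ is a standard response function, $f_{i^*}(l_j/l_{i^*})>f_{i^*}(1)=1$, so there is a constant $c>0$ with $f_{i^*}(\pdiff(t)/l_{i^*})-1\ge c$ for all sufficiently large $t$. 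Substituting into \eqref{eq:adap3} gives $\dot x_{i^*}(t)\ge c\,x_{i^*}(t)$ for large $t$; since $x_{i^*}(t)>0$, integrating $\tfrac{d}{dt}\ln x_{i^*}\ge c$ forces $x_{i^*}(t)$ to grow at least exponentially, contradicting the requirement $x_{i^*}(t)\to 0$ implied by convergence to $\vec\chi_j$. Hence $j=i^*$.

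I expect the main obstacle to be precisely this last step. A purely spectral argument based on Theorem~\ref{thm:stability} would show convergence to $\vec\chi_{i^*}$ for almost every initial condition only, because a trajectory could in principle approach an unstable equilibrium along its stable manifold; excluding this for \emph{every} strictly positive start requires the structural observation above, namely that whenever the energy $\pdiff$ exceeds $l_{i^*}$ the driving term of the shortest edge is uniformly bounded away from zero, which makes the coordinate $x_{i^*}$ incapable of decaying to zero. Isolating the shortest edge in this way, rather than arguing about the full linearization, is what turns the global statement into a one-line contradiction.
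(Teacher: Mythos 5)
Your proof is correct and takes essentially the same approach as the paper: after reducing via Lemmas \ref{lem:converge} and \ref{lem:basis} to excluding convergence to $\vec\chi_j$ with $j\neq i^*$, you use exactly the paper's argument that $\pdiff\to l_j>l_{i^*}$ makes the driving term $f_{i^*}(\pdiff/l_{i^*})-1$ uniformly positive, so that $\ln x_{i^*}$ grows linearly in time. The only cosmetic difference is the endgame of the contradiction (the paper contradicts the eventual upper bound on $x_{i^*}$ from Proposition \ref{prop:dissipative}, while you contradict $x_{i^*}\to 0$), which does not change the substance.
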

\begin{proof}
By Lemma \ref{lem:converge}, it suffices to prove the first part of the claim. Suppose by contradiction that $\pdiff$ converges to the length of some other edge $e \neq i^*$. Let $\delta \defas (l_e - l_{i^*}) / 2 l_{i^*} > 0$ and define $$W(t) \defas \ln x_{i^*}(t). $$ Then for all sufficiently large $t$, we may assume $\pdiff(t) \ge l_e - \delta l_{i^*}= (1 + \delta) l_{i^*}$, and therefore
\begin{align*}
\dot{W} &= \frac{\dot x_{i^*}}{x_{i^*}}   \\
&= \frac{x_{i^*}}{x_{i^*}} \left( f_{i^*} \left( \frac{\pdiff}{l_{i^*}} \right) - 1 \right)  \\
&= f_{i^*} \left( \frac{\pdiff}{l_{i^*}} \right) - 1  \\
&\ge f_{i^*} \left( 1 + \delta \right) - 1, 
\end{align*}
so that $\dot W(t)$ is larger than some positive constant for all sufficiently large $t$. This implies $W(t) \to \infty$. On the other hand, Proposition \ref{prop:dissipative} implies that for any fixed $\eps>0$ and all large $t$, $x_{i^*}(t) \le 1+\eps$, so that $W(t) \le \ln (1+\eps)$, yielding a contradiction. 
\end{proof}

\begin{figure}
\centering
\includegraphics[scale=0.4]{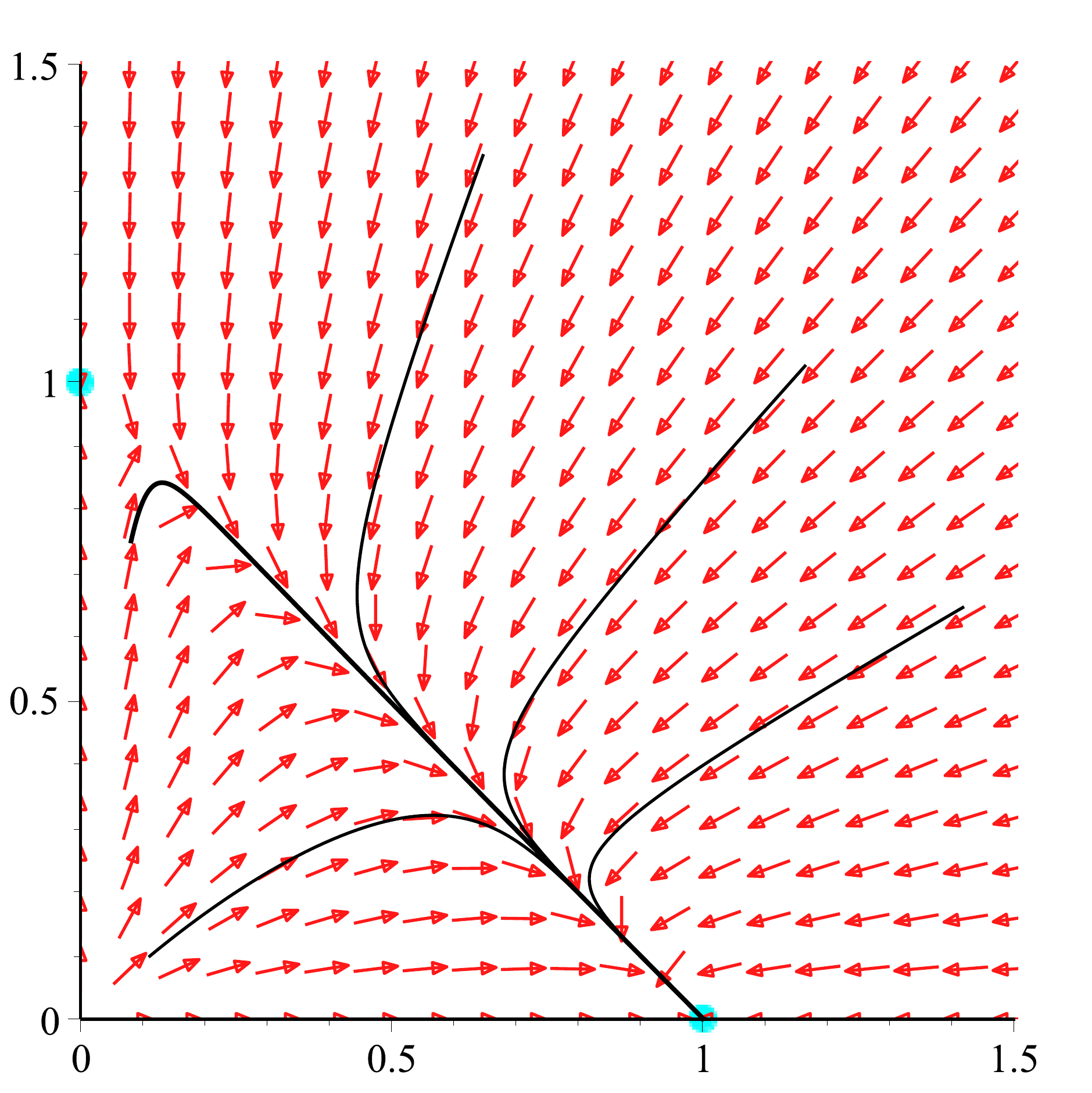}
\caption{A typical phase portrait of \eqref{eq:adap3} for a ring-shaped topology with a Type II response. Parameter values are $l_1=1$, $l_2=1.2$, $\mu=0.8$, $\alpha=1$. Several solutions are shown, all converging to the shortest-path equilibrium $\vec \chi_1=(1,0)$.}
\label{fig:phase}
\end{figure}
Figure \ref{fig:phase} shows a typical phase portrait of \eqref{eq:adap3} for the ring-shaped topology. 

\subsection{Comparison with existing models}
\label{sec:comparison}
It is useful to contrast our findings with those obtained in the original flow-based model by \citet{Tero:2007} in the case of a ring-shaped network. 
The flow-based model of Tero et al.~undergoes a bifurcation for both Type I and Type II response functions around the value $\mu=1$. When $\mu>1$, there are two stable equilibria and one unstable equilibrium; when $\mu<1$, there is a single stable equilibrium (and two unstable ones), but it lies in the interior of the positive orthant and therefore it does not correspond to any edge of the graph. Only when $\mu=1$ the equilibria are two, one for each edge, and the stability of each equilibrium depends on the length of the corresponding edge. 

In contrast, in the pressure-gradient based model proposed in this article, in a ring-shaped network there are always two equilibria, one of which is stable and the other one of which is not, independently of the details of the response function. The stable equilibrium is always the one corresponding to the shortest path in the network.

\section{Simulation of other network topologies}
\label{sec:simul}
In this section we simulate the dynamics \eqref{eq:adap2} on two more general network topologies, to which the analysis of Section \ref{sec:parallel} does not apply. 

We consider the bridge-shaped Wheatstone network of Figure \ref{fig:topo}(a), as well as the more complex network shown in Figure \ref{fig:topo}(b). The choice of these topologies is motivated by the fact that they are inherently non-series-parallel networks. In both figures, the shortest path is highlighted in bold; the length of the shortest path in the two networks is 2.5 and 90, respectively. 

\begin{figure}
\centering
\subfigure[Wheatstone network] 
{\includegraphics[scale=0.5]{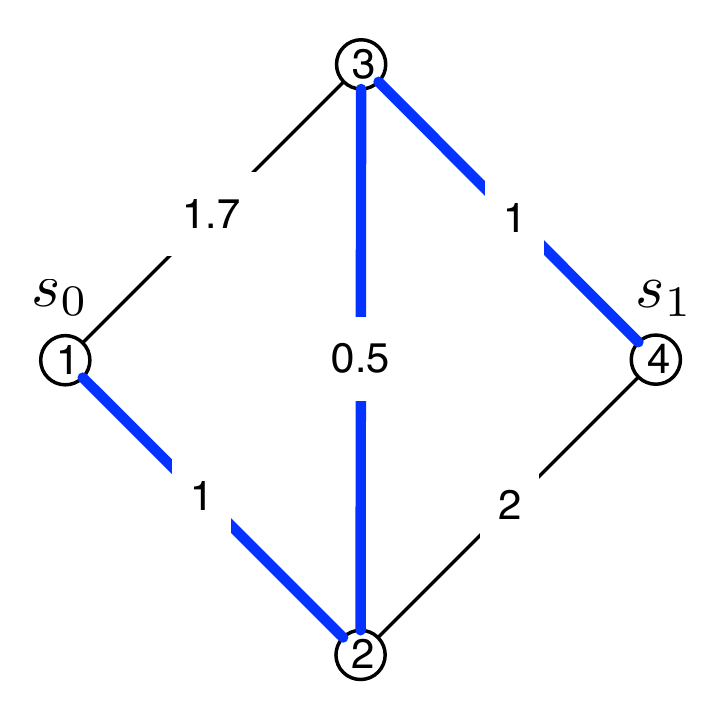}}
\hspace{5mm}
\subfigure[Another arbitrary topology] 
{\includegraphics[scale=0.5]{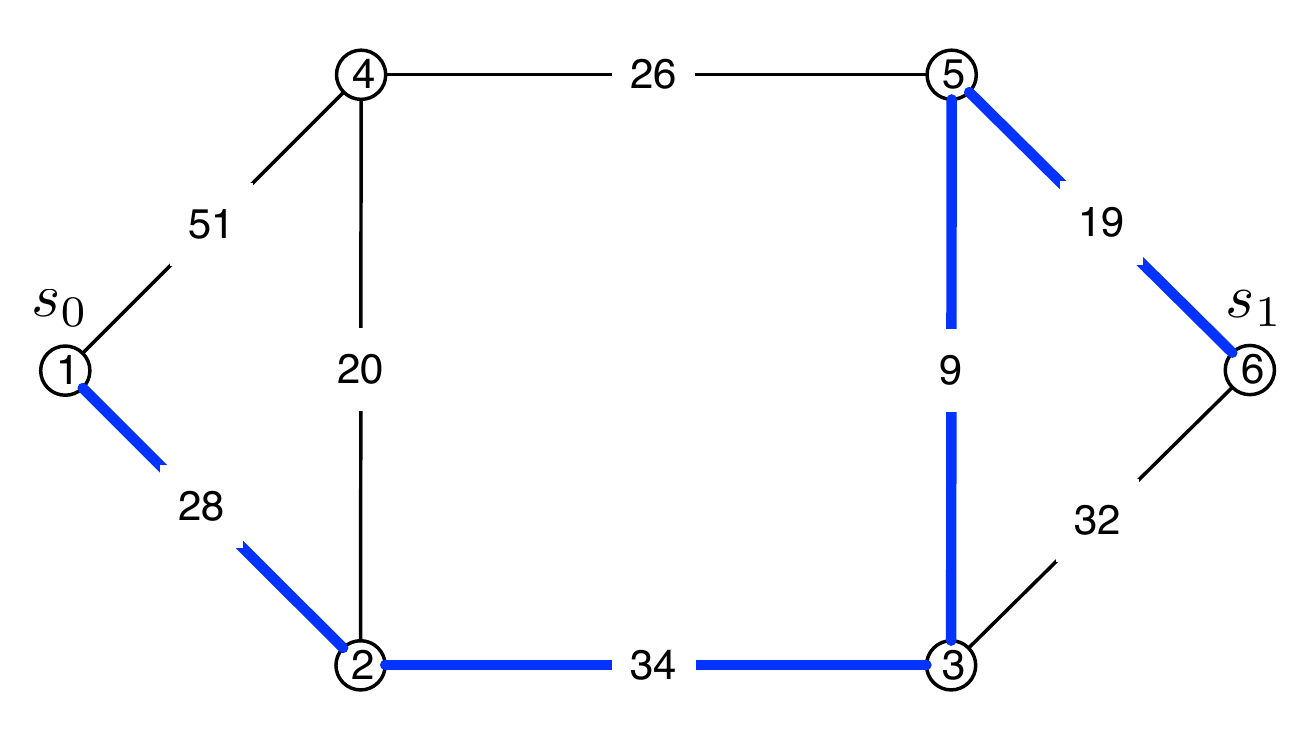}}
\caption{Examples of non-parallel-edge network topologies. Labels inside the nodes denote their identifier. Labels $s_0$ and $s_1$ denote the source and sink node, respectively. The label on each edge describes its length. The bold edges identify the shortest source-sink path.}
\label{fig:topo}
\end{figure}

We simulate the dynamics \eqref{eq:adap2} under an Euler discretization scheme with a stepsize $h = 0.1$: 
\begin{align}
\label{eq:discrete}
x_e[t+1] - x_e[t] &= h \cdot x_e[t] \left( f_e \left( \frac{q_e[t]}{x_e[t]} \right) - 1 \right) \text{ for all } e \in E, \\
\notag
 \vec q[t] &= \vec C[t] \vec B\tp \vec p[t], \\
\notag
\vec p[t] &= {(\vec B \vec C[t] \vec B\tp)}^+ \vec b, \\
\notag
\vec C[t] &= \mathrm{diag}(x_1[t]/l_1, \ldots, x_m[t]/l_m), 
\end{align}
where, as in Section \ref{sec:model}, $(\vec B \vec C[t] \vec B\tp)^+$ is the pseudoinverse of the Laplacian $\vec L[t] = \vec B \vec C[t] \vec B\tp$. 

As the $f_e$, we assume identical response functions of the form \eqref{eq:response}. As the initial condition, we select the symmetric state $\vec x[0] = \vec 1$, to ensure absence of bias towards any specific fixed point. We note that, in analogy with the case of a parallel-edge network, each $s_0$-$s_1$ path in the network is in one-to-one correspondence with a fixed point of \eqref{eq:discrete}. 

Since we expect each $x_e$ to approach either 0 or 1 as $t \to \infty$, we expect $\vec l\tp \vec x[t]$ to approach the length of some source-sink path in the network. Therefore, we define the quantity $\vec l\tp \vec x[t]$ to be the \emph{transport cost} at time $t$. The dynamics have an optimal behavior if, in the limit of large $t$, the transport cost approaches the length of the shortest path. 

Figures \ref{fig:wheat_plot}(a) and \ref{fig:g6_ter_plot}(b) plot the value of the transport cost for $t \in \{1,10,1000\}$ for the two networks, as a function of the power-law exponent $\mu$ and the saturation parameter $\alpha$, with ranges $\mu \in [0.5,1.5]$, $\alpha \in [0,2]$. The data confirm that indeed, as time passes, the transport cost approaches the optimal costs given by the shortest path lengths, independently of the values of the parameters $\mu$ and $\alpha$. However, the convergence speed is affected by the parameters; namely, the dynamics appear to be faster when the power-law exponent $\mu$ is large and the saturation parameter $\alpha$ is small. 

\begin{figure}[ht!]
\centering
\subfigure[Transport cost values for the network of Figure \ref{fig:topo}(a) after 1 (top surface), 10 (middle), 1000 (bottom) Euler steps, as a function of $\mu$ and $\alpha$.]
{\includegraphics[scale=.8]{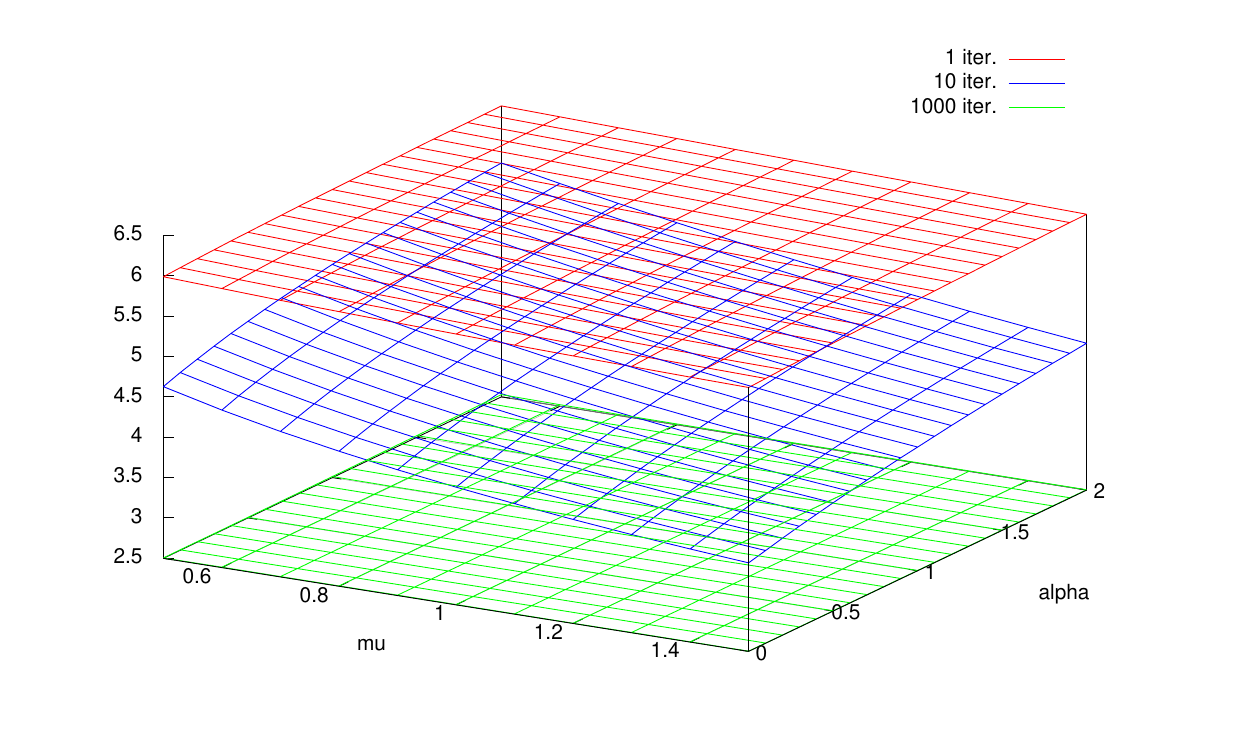}}
\label{fig:wheat_plot}
\\ 
\subfigure[Transport cost values for the network of Figure \ref{fig:topo}(b) after 1 (top surface), 10 (middle), 1000 (bottom) Euler steps, as a function of $\mu$ and $\alpha$.]
{\includegraphics[scale=.8]{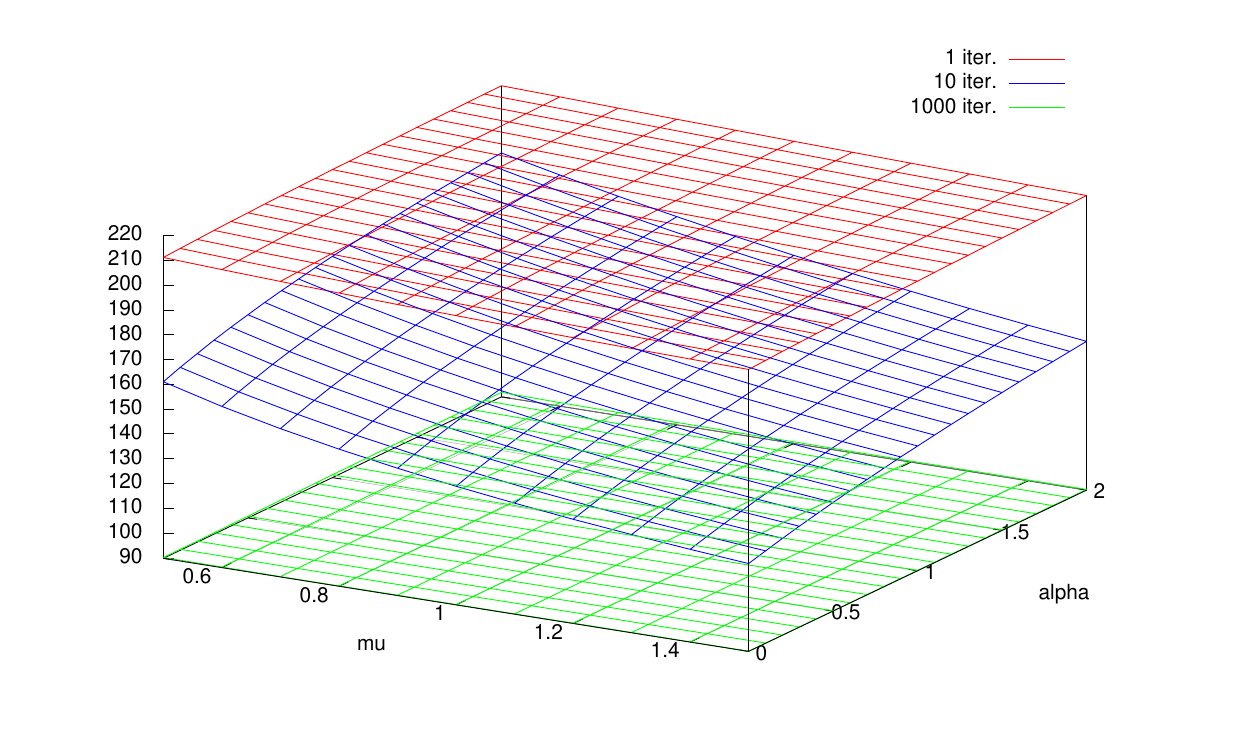}}
\label{fig:g6_ter_plot}
\caption{Transport cost values}
\end{figure}

\section{Discussion and concluding remarks}
\label{sec:conclusion}
The optimization of transport networks is a commonly occurring feature of several natural (as well as artificial) systems: blood vasculature and leaf venation are two examples. 
The fluid transport optimizing behavior of \emph{P.~polycephalum} may not be surprising in light of the idea that a more efficient use of the available resources (the size of the tubular structures) enables the organism to achieve a higher fitness. However, the accuracy achieved by the positive feedback mechanism between the pressure gradients along the veins and the widths of the tubular channels is somewhat remarkable: the steady state solution is not only approximately or locally optimal; at least in the case of a parallel-edge network, it is the globally optimal solution from the point of view of the total length of the tubes. 

In previous models, based on sheer amounts of flow, this global optimization behavior was known to rely on very specific values of the power-law exponent and of the saturation parameter ($\mu=1$ and $\alpha=0$). We have shown that a model where the controlling variables are the pressure gradients is, instead, able to support the global optimization behavior of the dynamics for a much wider class of response functions or range of parameters. It is a natural, though perhaps formidable, open problem to prove this conjecture analytically for network topologies that go beyond the simple parallel-edge topology we considered in this article. 

\paragraph{Acknowledgements}
The author would like to thank Kurt Mehlhorn and two anonymous reviewers for suggesting several improvements, as well as Alberto Gandolfi and Carmela Sinisgalli for fruitful discussions on \emph{P.~polycephalum}'s dynamics. 

\bibliographystyle{abbrvnat}      

\end{document}